\newtheorem{theorem}{Theorem}[section]
\newtheorem{lemma}[theorem]{Lemma}
\newtheorem{corollary}[theorem]{Corollary}
\newtheorem{example}[theorem]{Example}
\def\endproof{\qed\endtrivlist}
\let\csname endproof*\endcsname=\endproof
\def\qedsymbol{\ifmmode\bgroup\else$\bgroup\aftergroup$\fi
  \vcenter{\hrule\hbox{\vrule height.6em\kern.6em\vrule}\hrule}\egroup}
\def\qed{\ifmmode\else\unskip\nobreak\fi\quad\qedsymbol}
\renewcommand{\iff}{\Leftrightarrow}
\renewcommand{\le}{\leqslant}
\newcommand{\im}{\qopname\relax{no}{Im}}
\newcommand{\lBrack}{\lbrack\!\lbrack}
\newcommand{\rBrack}{\rbrack\!\rbrack}
\newcommand{\oobslash}{\circledbslash\kern-10.2pt\bigcirc}
\newcommand{\ooslash}{\oslash\kern-10.2pt\bigcirc}
\begin{document}

\journal{Fuzzy Sets and Systems}

\title{\Large\bf Brzozowski type determinization for fuzzy automata\tnoteref{t1}}
%%%
\tnotetext[t1]{Research supported by Ministry of Education, Science and Technological Development, Republic of Serbia, Grant No. 174013}
%%%
%%%%%%%%%%%%%%%%%%%%%%%%%%%%%%%%%%%%%%%%%%%%%%%%%%
\author{Zorana Jan\v ci\' c}
\ead{zoranajancic329@gmail.com}

\author{Miroslav \'Ciri\'c\corref{cor}}
\ead{miroslav.ciric@pmf.edu.rs}

\cortext[cor]{Corresponding author. Tel.: +38118224492; fax: +38118533014.}
\address{University of Ni\v s, Faculty of Sciences and Mathematics, Vi\v segradska 33, 18000 Ni\v s, Serbia}
%%%%%%%%%%%%%%%%%%%%%%%%%%%%%%%%%%%%%%%%%%%%%%%%%%

\begin{abstract}
In this paper we adapt the well-known Brzozowski determinization method to fuzzy automata.~This~method gives better results than all previously known methods for determinization of fuzzy auto\-mata developed by B\v elohl\'avek [Inform Sciences 143 (2002) 205--209], Li and Pedrycz [Fuzzy Set Syst 156 (2005) 68--92], Ignjatovi\'c et al. [Inform Sciences 178 (2008) 164--180], and Jan\v ci\'c et al. [Inform Sciences 181 (2011) 1358--1368].~Namely, as in the case of ordinary nondeterministic automata, Brzozowski type determinization~of~a~fuzzy automaton results in a minimal crisp-deterministic fuzzy automaton equivalent to the starting fuzzy automaton, and we show that there are cases when all previous methods result in infinite automata, while Brzozowski type deter\-minization results in a finite one.~The paper deals with fuzzy automata over complete residuated~lattices, but identical results can also be obtained in a more general context, for fuzzy automata over lattice-ordered monoids, and even for weighted automata over commutative semirings.
\end{abstract}

\begin{keyword}
Fuzzy automata; Fuzzy languages; Crisp-deterministic fuzzy automata; Determinization; Minimal automata; Nerode automaton; Complete residuated lattices
\end{keyword}

\maketitle

\section{Introduction}

The well-known Brzozowski's double reversal algorithm, presented for the first time in \cite{Brz.62}, is a concise and elegant algorithm having two purposes.~When its input is a nondeterministic automaton, the algorithm alternates two reverse and determinization operations (more precisely, the accessible subset construction) and produces a minimal deterministic automaton equivalent to the starting automaton.~In other words, the algorithm performs both determinization~and~mini\-mi\-zation.~On the other hand, when the input is a deterministic automaton, the algorithm performs its~minimization applying just one reverse and determinization operation.~Despite its worst-case exponential time complexity, the algorithm has recently gained popularity due to its excellent performance in practice, where it frequently outperforms theoretically faster algorithms (cf. \cite{AMR.07,AMR.13,TV.05,W.95}).~For more information about Brzozowski's double reversal algorithm, and~about algorithms for determinization of nondeterministic automata in general, we refer to \cite{CKP.02,Sak.09,vGP.08p,vGP.08tr}.

The purpose of this paper is to adapt Brzozowski's double reversal algorithm to fuzzy automata.~We start from an arbitrary fuzzy automaton and we show that applying twice the construction of a reverse Nerode automaton we obtain an equivalent minimal crisp-deterministic fuzzy automaton.~We also~demonstrate~that this fuzzy version of Brzozowski's double reversal algorithm outperforms all previous methods for determinization of fuzzy auto\-mata developed by B\v elohl\'avek \cite{Bel.02}, Li and Pedrycz \cite{LP.05}, Ignjatovi\'c~et~al.~\cite{ICB.08}, and Jan\v ci\'c et al.~\cite{JIC.11}, in the sense that it not only produces a smaller automaton than all the above mentioned methods, but even when all these methods produce infinite automata, Brzozowski type determinization can produce a finite one.~Moreover, when the starting fuzzy automaton is crisp-deterministic and accessible, its minimization is performed applying just one construction of a reverse Nerode automaton

The paper is organized as follows.~In the preliminary section we recall basic notions and notation concerning fuzzy sets and relations, fuzzy automata and languages and crisp-deterministic fuzzy automata, we recall the concept of a Nerode automaton and introduce the concept of a reverse Nerode automaton.~The main results are presented in Section 3.~We first introduce the notion of a right language associated with a state of a fuzzy automaton and describe some basic properties of right languages.~After that, we construct the right language automaton of a fuzzy automaton $\cal A$, we prove that it is isomorphic to the derivative automaton of the fuzzy language recognized by $\cal A$ (Theorem 3.3), and consequently, if all right fuzzy languages
associated with states of an accessible crisp-deterministic fuzzy automaton $\cal A$ are pairwise different, we show that $\cal A$ is minimal.~Then we prove that the
reverse Nerode automaton of any accessible crisp-deterministic fuzzy automaton $\cal A$ is a minimal crisp-deterministic fuzzy automaton equivalent to the reverse automaton of $\cal A$ (Theorem 3.5), and further, we define the concept of a Brzozowski automaton of a fuzzy automaton $\cal A$ and prove that it is a minimal crisp-deterministic fuzzy automaton~equivalent to $\cal A$ (Theorem 3.6).~Finally, we give a simple example of a fuzzy automaton $\cal A$ for which all previously known determinization methods produce an infinite crisp-deterministic fuzzy automaton, while Brzozowski automaton of $\cal A$ is finite and has only three states.

The most popular structure of membership values that has recently been used in the theory of fuzzy~sets, especially in the theory of fuzzy automata, are complete residuated lattices.~For this reason, this paper also deals with fuzzy automata over complete residuated lattices.~However, identical results can also be obtained in a more general context, for fuzzy automata over lattice-ordered monoids, and even for weighted automata over commutative semirings.

\section{Preliminaries}

\subsection{Fuzzy sets and relations}

In this work we will use complete residuated lattices as
structures of membership values.~A {\it residuated lattice\/} is
an algebra ${\cal L}=(L,\land ,\lor , \otimes ,\to , 0, 1)$ such
that
\begin{itemize}
\parskip=-2pt%\itemindent=3mm
\item[{\rm (L1)}] $(L,\land ,\lor , 0, 1)$ is a lattice with the
least element $0$ and the greatest element~$1$, \item[{\rm (L2)}]
$(L,\otimes ,1)$ is a commutative monoid with the unit $1$,
\item[{\rm (L3)}] $\otimes $ and $\to $ form an {\it adjoint
pair\/}, i.e., they satisfy the {\it adjunction property\/}: for
all $x,y,z\in L$,
\begin{equation}\label{eq:adj}
x\otimes y \le z \ \iff \ x\le y\to z .
\end{equation}
\end{itemize}
If, additionally, $(L,\land ,\lor , 0, 1)$ is a complete lattice, then ${\cal L}$ is called a {\it
complete residuated lattice\/}.

The operations $\otimes $ (called {\it multiplication\/}) and $\to
$ (called {\it residuum\/}) are intended for modeling the
conjunction and implication of the corresponding logical calculus,
and supremum ($\bigvee $) and infimum ($\bigwedge $) are intended
for modeling of the existential and general quantifier,
respectively.~For basic properties of complete residuated lattices we refer
to \cite{Bel.02b,BV.05}.

The most studied and applied structures of truth values, defined
on the real unit interval $[0,1]$ with\break $x\land y =\min
(x,y)$ and $x\lor y =\max (x,y)$, are the {\it {\L}ukasiewicz
structure\/} ($x\otimes y = \max(x+y-1,0)$, $x\to y=
\min(1-x+y,1)$), the {\it Goguen} ({\it product\/}) {\it
structure\/} ($x\otimes y = x\cdot y$, $x\to y= 1$ if $x\le y$
and~$=y/x$ otherwise) and the {\it G\"odel structure\/} ($x\otimes
y = \min(x,y)$, $x\to y= 1$ if $x\le y$ and $=y$
otherwise).
Another important set of truth values is the set
$\{a_0,a_1,\ldots,a_n\}$, $0=a_0<\dots <a_n=1$, with $a_k\otimes
a_l=a_{\max(k+l-n,0)}$ and $a_k\to a_l=a_{\min(n-k+l,n)}$. A
special case of the latter algebras is the two-element Boolean
algebra of classical logic with the support $\{0,1\}$.~The only
adjoint pair on the two-element Boolean algebra consists of the
classical conjunction and implication operations.~This structure
of truth values we call the {\it Boolean structure\/}.

In the sequel $\cal L$ will be a complete residuated
lattice.~A {\it fuzzy subset\/} of a set $A$ {\it over\/} ${\cal
L}$, or~simply a {\it fuzzy subset\/} of $A$, is any function from
$A$ into $L$.~Ordinary crisp subsets~of~$A$ are considered as
fuzzy subsets of $A$ taking membership values in the set
$\{0,1\}\subseteq L$.~Let $f$ and $g$ be two fuzzy subsets of
$A$.~The {\it equality\/} of $f$ and $g$ is defined as the usual
equality of mappings, i.e., $f=g$ if and only if $f(x)=g(x)$, for
every $x\in A$. The {\it inclusion\/} $f\le g$ is also defined
pointwise: $f\le g$ if and only if $f(x)\le g(x)$, for every $x\in
A$. Endowed with this partial order the set $L^A$ of all fuzzy
subsets of $A$ forms a complete residuated lattice, in which the
meet (intersection) $\bigwedge_{i\in I}f_i$ and the join (union)
$\bigvee_{i\in I}f_i$ of an arbitrary family $\{f_i\}_{i\in I}$ of
fuzzy subsets of $A$ are functions from $A$ into $L$ defined by
\[
\left(\bigwedge_{i\in I}f_i\right)(x)=\bigwedge_{i\in I}f_i(x), \qquad \left(\bigvee_{i\in
I}f_i\right)(x)=\bigvee_{i\in I}f_i(x),
\]
for every $x\in A$, and $f\otimes g$ and $f\to g$ are defined by $f\otimes g(x) = f (x)\otimes g(x)$ and $f\to g(x) = f (x)\to g(x)$, for all $f,g\in L^A$ and $x\in A$.

A {\it fuzzy relation\/} between sets $A$ and $B$ (in this order) is any mapping from
$A\times B$ into $L$, i.e., any fuzzy subset of $A\times B$, and the equality, inclusion (ordering), joins and meets
of fuzzy relations are defined as for fuzzy sets.~The set of all fuzzy relations between $A$ and $B$ will be denoted by $L^{A\times B}$.~In particular, a fuzzy relation on a set $A$ is any function from
$A\times A$ into $L$, i.e., any fuzzy subset of $A\times A$.  The {\it reverse\/} of a fuzzy relation $\varphi\in L^{A\times B}$ is a fuzzy
relation $\overline\varphi\in L^{B\times A}$ defined by $\overline\varphi(b, a) = \varphi(a, b)$, for all $a\in A$ and $b\in B$. A crisp relation is a fuzzy
relation which takes values only in the set $\{0, 1\}$, and if $\varphi$ is a crisp relation of $A$ to $B$, then expressions
$"\varphi(a, b) = 1"$ and $"(a, b)\in \varphi"$ will have the same meaning.

For non-empty sets $A$, $B$ and $C$, and fuzzy relations $\varphi\in L^{A\times B}$ and $\psi\in L^{B\times C}$,  their {\it
composition\/} is a fuzzy relation $\varphi\circ \psi\in  L^{A\times C}$ defined by
\begin{equation}\label{eq:comp.rr}
(\varphi \circ \psi )(a,c)=\bigvee_{b\in B}\,\varphi(a,b)\otimes \psi(b,c),
\end{equation}
for all $a\in A$ and $c\in C$.~Moreover, for $f\in L^A$, $\varphi \in L^{A\times B}$ and $g\in L^B$, compositions $f\circ\varphi\in L^B$ and $\varphi\circ g\in L^A$ and the scalar product $f\circ g\in L$ are defined by
\begin{equation}\label{eq:comp.sr}
(f \circ \varphi)(b)=\bigvee_{a'\in A}\,f(a')\otimes \varphi(a',b),\quad
(\varphi \circ g)(a)=\bigvee_{b'\in B}\,\varphi(a,b')\otimes g(b'),\quad
f \circ g =\bigvee_{a\in A}\,f(a)\otimes g(a) ,
\end{equation}
for all $a\in A$ and $b\in B$.

It is easy to check that $(\varphi_1\circ \varphi_2)\circ \varphi_3=\varphi_1\circ (\varphi_2\circ \varphi_3)$, $(f_1\circ\varphi_1)\circ\varphi_2=f_1\circ (\varphi_1\circ\varphi_2)$, $(f_1\circ\varphi_1)\circ f_2= f_1\circ (\varphi_1\circ f_2)$ and $(\varphi_1\circ\varphi_2)\circ f_1 = \varphi_1\circ(\varphi_2\circ f_1)$, for all fuzzy relations $\varphi_1$, $\varphi_2$ and $\varphi_3$ and fuzzy sets $f_1$ and $f_2$ for which  these compositions are defined,
and consequently, all parentheses in these expressions can be omitted.
Moreover, the composition of fuzzy relations is isotone in both arguments.

\subsection{Fuzzy automata}

In the further text, let ${\cal L}=(L,\land ,\lor , \otimes ,\to , 0, 1)$ be a complete residuated lattice and $X$ a finite alphabet.

A {\it fuzzy automaton over\/} $\cal L$ and $X$ , or simply a {\it fuzzy automaton\/}, is a quadruple
${\cal A}=(A,\delta,\sigma,\tau)$, where~$A$ is a non-empty set, called  the
{\it set of states\/}, $\delta:A\times X\times A\to L$~is~a
fuzzy subset~of $A\times X\times A$, called the~{\it fuzzy transition relation\/}, and $\sigma: A\to L$ and $\tau: A\to L$ are fuzzy subsets of~$A$, called the {\it fuzzy set of initial states} and the {\it fuzzy set terminal states}, respectively.~We can
interpret $\delta(a,x,b)$ as the degree~to~which an~input letter $x\in X$~causes~a~transition from a state $a\in A$ into a
state $b\in A$, whereas we can interpret $\sigma(a)$ and $\tau(a)$ as the degrees to which $a$ is respectively an input state and a terminal state. For methodological reasons we sometimes allow the set of states $A$ to be infinite.~A~fuzzy auto\-maton whose set of states is finite is called a
{\it fuzzy finite automaton\/}.~

Let $X^*$ denote the free monoid over the alphabet $X$, and  let $\varepsilon\in X^*$ be the
empty word.~The function~$\delta$~can be extended~up to a function
$\delta^*:A\times X^*\times A\to L$ as follows: For $a,b\in A$ and the empty word $\varepsilon$ we set
\begin{equation}\label{eq:delta.e}
\delta^*(a,\varepsilon ,b)=\begin{cases}\ \ 1, & \text{if}\ a=b, \\ \ \ 0, & \mbox{otherwise,}
\end{cases}
\end{equation}
and for $a,b\in A$, $u\in X^*$ and $x\in X$ we set
\begin{equation}\label{eq:delta.x}
\delta^*(a,ux ,b)= \bigvee _{c\in A} \delta^*(a,u,c)\otimes \delta (c,x,b).
\end{equation}
For each $u\in X^*$ we define a fuzzy relation $\delta_u\in L^{A\times A}$ by $\delta_u (a,b) = \delta^* (a,u,b)$, for all $a,b\in A$.~It is easy to check that  $\delta_{uv}= \delta_u\circ \delta_v$, for all $u,v\in X^*$.

A {\it fuzzy language\/} in $X^*$ over ${\cal L}$, or
briefly a {\it fuzzy language\/}, is any fuzzy subset of the free monoid $X^*$.~A {\it fuzzy language recognized by a fuzzy automaton\/} ${\cal A}=(A,\delta , \sigma,\tau )$ is a fuzzy language in $\lBrack{\cal A}\rBrack\in L^{X^*}$ defined by
\begin{equation}\label{eq:recog}
\lBrack{\cal A}\rBrack(u) = \bigvee_{a,b\in A} \sigma(a)\otimes \delta^*(a,u,b)\otimes \tau(b) = \sigma \circ \delta_u\circ \tau,
\end{equation}
for any $u\in X^*$.~In other words, the~membership degree of the word
$u$ to $\lBrack{\cal A}\rBrack$, i.e., the degree of recognition or acceptance of the word $u$, is equal to the degree to which $u$ leads from some initial to some terminal state.~Fuzzy automata $\cal A$~and~$\cal B$ are called {\it language equivalent\/}, or shortly just {\it equivalent\/}, if they recognize the same fuzzy language, i.e., if $\lBrack{\cal A}\rBrack=\lBrack{\cal B}\rBrack$.

For more information on the recognizability of fuzzy languages we refer to \cite{Boz.06,Boz.08,Boz.10}, and for infor\-ma\-tion on fuzzy automata over complete residuated lattices we refer to \cite{CSIP.10,IC.10,ICB.08,ICB.09,ICBP.10,SCI.11,Qiu.01,Qiu.06,Wechler.78,WQ.10,XQL.09}.

\subsection{Crisp-deterministic fuzzy automata}\label{sec:cdfa}

A {\it crisp-deterministic fuzzy  automaton\/} (for short: {\it cdfa\/})
over $X$ and ${\cal L}$ is a quadruple ${\cal A}=(A,\delta,a_0,\tau )$, where $A$ is a non-empty {\it set of states\/}, $\delta :A\times X\to A$ is a~{\it transition function\/}, $a_0\in A$ is an {\it initial state\/} and $\tau :A\to L$ is a {\it fuzzy set of final states\/}.~Equivalently, a crisp-deterministic
fuzzy automaton can be considered as a fuzzy~auto\-maton ${\cal A}=(A,\delta,\sigma,\tau )$ whose fuzzy transition function $\delta $ and fuzzy set of initial states
$\sigma $ satisfy the following conditions:
for all $x\in X$ and $a\in A$ there exists
$a'\in A$ such that $\delta_x(a,a')=1$, and $\delta_x(a,b)=0$, for all $b\in A\setminus
\{a'\}$,  and~ $\sigma(a_0)=1$, and $\sigma(a)=0$ for every
$a\in A\setminus \{a_0\}$. If the set of states $A$ is finite, then
$\cal A$ is called a
{\it crisp-deterministic fuzzy finite automaton\/} (for short: {\it cdffa\/}).

For a crisp-deterministic fuzzy  automaton ${\cal A}=(A,\delta,a_0,\tau )$,
the transition~function $\delta $ can be extended to a function $\delta^*:A\times X^*\to A$ by putting
$\delta^*(a,\varepsilon)=a$,  and $\delta^*(a,ux)=\delta (\delta^*(a,u),x)$, for
all $a\in A$, $u\in X^*$ and $x\in X$. A~state $a\in A$ is called {\it accessible\/} if there exists $u\in X^*$ such that
$\delta^*(a_0,u)=a$.~If every state of $\cal A$ is accessible, then $\cal A$ is called an {\it
accessible crisp-deterministic fuzzy automaton\/}.~The {\it fuzzy language\/} recognized
by $\cal A$ is the fuzzy language $\lBrack
{\cal A}\rBrack \in L^{X^*}$ given by
\begin{equation}\label{eq:cd-beh}
\lBrack {\cal A}\rBrack(u)=\tau (\delta^*(a_0,u)) \ ,
\end{equation}
for every $u\in X^*$.~Obviously, the image of $\lBrack {\cal A}\rBrack$ is contained in the image
of $\tau $, which is finite if the set of states is finite.~A fuzzy language $f\in L^{X^*}$ is called {\it cdffa-recognizable\/} if
there exists a crisp-deterministic fuzzy finite~auto\-maton $\cal A$ over $X$ and ${\cal L}$ such that
$\lBrack {\cal A}\rBrack=f $.

Let ${\cal A}=(A,\delta,a_0,\tau )$ and ${\cal A}'=(A',\delta',a_0',\tau' )$ be crisp-deterministic fuzzy automata. A function $\phi :A\to A'$ is called a {\it homomorphism\/} of $\cal A$ into ${\cal A}'$ if $\phi (a_0)=a_0'$, $\phi(\delta(a,x))=\delta'(\phi(a),x)$ and $\tau(a)=\tau'(\phi(a))$, for all~$a\in A$ and $x\in X$.~A bijective~homomor\-phism is called an {\it isomorphism\/}.~By $|{\cal A}|$ we denote the cardinality of the set of states of a fuzzy automaton $\cal A$. A crisp-deterministic fuzzy automaton ${\cal A}$ is called a {\it minimal crisp-deterministic fuzzy automaton} of a fuzzy language $f \in L^{X^*}$ if~it~recognizes $f$ and $|{\cal A} |\leqslant |{\cal A'}|$, for any crisp-deterministic fuzzy automaton ${\cal A'}$ which recognizes $f$.~Note that minimal crisp-deterministic fuzzy automata and minimization procedures that result in such automata were studied in \cite{ICBP.10,LP.07}.

For a fuzzy language $f\in L^{X^*}$ and $u\in X^*$, we define a fuzzy language $u^{-1}f\in L^{X^*}$ by $(u^{-1}f)(v)=f(uv)$, for each $v\in X^*$.~We call $u^{-1}f $  the {\it left derivative\/}~of~$f $ with respect to $u$.~Let $A_f =\{u^{-1}f\mid u\in X^*\}$
denote the set of all left derivatives of $f $, and let $\delta_f :A_f \times X\to
A_f $ and $\tau_f :A_f\to L$  be functions defined by
\begin{equation}\label{eq:delta-tau.phi}
\delta_f(g ,x)=x^{-1}g \ \ \ \text{and}\ \ \ \tau_f(g )=g (\varepsilon),
\end{equation}
for all $g \in A_f $ and $x\in X$.~Then ${\cal A}_f = (A_f , \delta_f
,f ,\tau_f )$ is an accessible  crisp-deterministic fuzzy automaton,~and~it is called the {\it left
derivative automaton\/}, or just the {\it derivative automaton\/}, of the fuzzy language $f $ \cite{IC.10,ICBP.10}.~It was proved in
\cite{ICBP.10} that the derivative automaton ${\cal A}_f $ is a minimal  crisp-deterministic fuzzy~auto\-maton which recognizes $f$, and therefore, ${\cal A}_f$ is finite if and only if the fuzzy language $f $ is cdffa-recognizable.~An algorithm for~construction of the
derivative automaton of a fuzzy language, based on simultaneous construction of the derivative automata of
ordinary languages $f^{-1}(a)$, for all $a\in \im (f )$,  was also given in~\cite{ICBP.10}.

\subsection{Nerode and reverse Nerode automaton}

Let ${\cal A}=(A,\delta,\sigma,\tau)$ be a fuzzy automaton over $\cal L$ and $X$.~The {\it reverse fuzzy
automaton\/} of ${\cal A}$ is a fuzzy automaton $\overline{\cal
A}=(A,\bar{\delta},\bar{\sigma},\bar{\tau})$, where $\bar{\sigma}=\tau$, $\bar{\tau}=\sigma$, and $\bar{\delta}:
A\times X\times A \to L$ is defined by:
\[
\bar{\delta}(a,x,b)=\delta(b,x,a),
\]
for all $a,b\in A$ and $x\in X$.~Roughly speaking, the reverse automaton of ${\cal A}$ is obtained from ${\cal A}$ by exchanging fuzzy sets of initial and final states and ``reversing'' all the transitions.

Due to the fact that the multiplication $\otimes$ is commutative, we have that $\bar{\delta}_{u}(a,b)=\delta_{\bar{u}}(b,a)$, for all $a,b\in A$ and $u\in X^*$.~For a fuzzy language $f\in L^{X^*}$, the {\it reverse fuzzy language\/} of $f$ is a fuzzy language $\overline{f}\in L^{X^*}$ defined by
$\overline{f}(u)=f(\bar{u})$, for each $u\in X^*$.~As $\overline{(\bar{u})} = u$ for all $u\in X^*$, we have that $\overline{(\overline{f})}=f$,
for each fuzzy language $f$.

If ${\cal A}$ is a fuzzy automaton over  ${\cal L}$ and $X$, it is easy to see that the reverse fuzzy automaton $\overline{\cal A}$ recognizes the reverse fuzzy language $\overline{\lBrack{\cal A}\rBrack}$ of the fuzzy language ${\lBrack{\cal A}\rBrack}$ recognized by $\cal A$, i.e., $\lBrack\overline{{\cal A}}\rBrack=\overline{\lBrack{\cal A}\rBrack}$.

Let ${\cal A}=(A,\delta,\sigma,\tau)$ be a fuzzy automaton over
$X$ and ${\cal L}$. For each $u\in X^*$ we define fuzzy sets $\sigma_u,\tau_u\in
L^A$ as follows:
\[
\sigma_{u}(a)=\bigvee_{b\in A}\sigma(b)\otimes\delta^*(b,u,a),\quad \tau_{u}(a)=\bigvee_{b\in A}\delta^*(b,u,a)\otimes\tau(b),
\]
for each $a\in A$. Equivalently,
\[
\sigma_{u}=\sigma\circ\delta_{u},\qquad \tau_u=\delta_u\circ \tau .
\]
The \textit{Nerode automaton} of ${\cal
A}=(A,\delta,\sigma,\tau)$ is a crisp-deterministic automaton ${\cal
A}_{N}=(A_{N},\delta_{N},\sigma_{\varepsilon},\tau_{N})$ whose set of states is $A_{N}=\{\sigma_{u}\mid
u\in X^{*}\}$, and functions $\delta_{N}:A_{N}\times X\longrightarrow A_{N}$ and
 $\tau_{N}:A_{N}\to L$ are defined by
\begin{equation}\label{eq:Naut}
\delta_{N}(\sigma_{u},x)=\sigma_{ux},\qquad\qquad \tau_{N}(\sigma_{u})=\sigma_{u}\circ\tau ,
\end{equation}
for all $u\in X^*$ and $x\in X$.~The concept of the Nerode automaton of a fuzzy automaton was first intro\-duced
by Ignjatovi\' c et al.~in \cite{ICB.08,ICBP.10}, for fuzzy automata over a complete residuated lattice, but it was also pointed out that the same construction can be extended to fuzzy automata over a lattice-ordered monoid, weighted automata over a semiring, and even to weighted automata over a strong bimonoid (cf.~\cite{CDIV.10,JIC.11}).~In \cite{ICB.08} it was also shown that the Nerode automaton of a fuzzy automaton ${\cal A}$ over a complete residuated
lattice is a crisp-deterministic fuzzy automaton equivalent to ${\cal A}$, i.e., $\lBrack{\cal A}_{N}\rBrack=\lBrack{\cal A}\rBrack$.

By the {\it reverse Nerode automaton\/} of  $\cal
A$ we will mean the Nerode automaton of the reverse fuzzy automaton $\overline{\cal A}$ of ${\cal A}$.~For the sake of simplicity, we denote the reverse Nerode automaton of $\cal A$ by ${\cal A}_{ \overline N}$ (instead of $(\overline{\cal A})_{N}$). Let us note that ${\cal A}_{\overline N}=(A_{\overline N},\delta_{\overline N},\tau_\varepsilon,\tau_{\overline N})$, where $A_{\overline N}=\{\tau_u\mid u\in
X^*\}$, and the functions $\delta_{\overline N}:A_{\overline N}\times X\to A_{\overline N}$ and $\tau_{\overline N}:A_{\overline N}\to L$ are given by
\begin{equation}\label{eq:rev.Naut}
\delta_{\overline N}(\tau_u,x)=\tau_{xu}, \qquad \tau_{\overline N}(\tau_u)=\tau_u\circ \sigma,
\end{equation}
for all $u\in X^*$ and $x\in X$.

\section{The main results}

Let ${\cal A}=(A,\delta,\sigma,\tau)$ be a fuzzy automaton over
$X$ and ${\cal L}$.

For any state $a\in A$, the {\it right fuzzy language associated with}  $a$ is the fuzzy language $\tau_{a}\in L^{X^*}$ defined
by
\[
\tau_{a}(u)=\bigvee_{b\in A} \delta^{*}(a,u,b)\otimes\tau(b),
\]
for each $u\in X^*$. In other words, $\tau_a$ is the fuzzy language recognized by a fuzzy automaton ${\cal
A}'=(A,\delta,a,\tau)$ obtained from ${\cal A}$ by replacing $\sigma $ with
the single crisp initial state $a$. The {\it left fuzzy language associated with}
$a$ is the fuzzy language $\sigma_{a}\in L^{X^*}$ given by
\[
\sigma_{a}(u)=\bigvee_{b\in A} \sigma(b)\otimes\delta^{*}(b,u,a),
\]
for each $u\in X^*$, i.e., the fuzzy language recognized by a fuzzy automaton ${\cal
A}'=(A,\delta,\sigma,\{a\})$ obtained from ${\cal A}$ by replacing $\tau $ with
the single crisp terminal state $a$.

We can easily show that the following is true.

\begin{lemma}\label{pr:rlll}
The right fuzzy language associated with the state $a$ of a fuzzy automaton
${\cal A}$ is equal to the reverse of the left fuzzy
language associated with the state $a$ in the reverse fuzzy automaton $\overline{\cal A}$.
\end{lemma}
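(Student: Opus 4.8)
Write $\mathcal A=(A,\delta,\sigma,\tau)$ and $\overline{\mathcal A}=(A,\bar\delta,\bar\sigma,\bar\tau)$ with $\bar\sigma=\tau$, $\bar\tau=\sigma$, $\bar\delta(a,x,b)=\delta(b,x,a)$. Let $\tau_a\in L^{X^*}$ be the right fuzzy language associated with $a$ in $\mathcal A$, and let $\sigma'_a\in L^{X^*}$ denote the left fuzzy language associated with $a$ in $\overline{\mathcal A}$, i.e.\ $\sigma'_a(u)=\bigvee_{b\in A}\bar\sigma(b)\otimes\bar\delta^*(b,u,a)$. The claim is that $\tau_a=\overline{\sigma'_a}$, so the plan is to verify the pointwise identity $\sigma'_a(u)=\tau_a(\bar u)$ for every $u\in X^*$ (that is, $\sigma'_a=\overline{\tau_a}$) and then apply the already-noted fact that $\overline{(\overline f)}=f$ for every fuzzy language $f$.

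The core computation proceeds by unfolding definitions. Starting from $\sigma'_a(u)=\bigvee_{b\in A}\bar\sigma(b)\otimes\bar\delta^*(b,u,a)$, I would substitute $\bar\sigma=\tau$ and invoke the reversal identity for the extended transition relation, namely $\bar\delta_u(b,a)=\delta_{\bar u}(a,b)$, i.e.\ $\bar\delta^*(b,u,a)=\delta^*(a,\bar u,b)$, which is recorded in the preliminaries as a consequence of the commutativity of $\otimes$. This turns the expression into $\bigvee_{b\in A}\tau(b)\otimes\delta^*(a,\bar u,b)$, and one more use of commutativity of $\otimes$ rewrites each summand as $\delta^*(a,\bar u,b)\otimes\tau(b)$, so that $\sigma'_a(u)=\bigvee_{b\in A}\delta^*(a,\bar u,b)\otimes\tau(b)=\tau_a(\bar u)$ by the very definition of the right fuzzy language $\tau_a$.

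Having established $\sigma'_a(u)=\tau_a(\bar u)$ for all $u$, i.e.\ $\sigma'_a=\overline{\tau_a}$, it remains only to reverse both sides: $\overline{\sigma'_a}=\overline{(\overline{\tau_a})}=\tau_a$, which is exactly the assertion that $\tau_a$ equals the reverse of the left fuzzy language associated with $a$ in $\overline{\mathcal A}$. There is no genuine obstacle here; the argument is pure bookkeeping. The only point requiring a little care is the direction of the reversal at the level of $\delta^*$ versus $\delta$ and the corresponding swap of arguments, but since the identity $\bar\delta_u(a,b)=\delta_{\bar u}(b,a)$ is already available from the preliminaries, even that reduces to a substitution. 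Alternatively, one could bypass the reversal identity and prove $\bar\delta^*(b,u,a)=\delta^*(a,\bar u,b)$ directly by induction on $|u|$ using \eqref{eq:delta.e}, \eqref{eq:delta.x} and commutativity of $\otimes$, but that is precisely the content of what the text already asserts, so it need not be repeated.
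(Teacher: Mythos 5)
Your proof is correct: the pointwise computation $\sigma'_a(u)=\bigvee_{b}\tau(b)\otimes\bar\delta^*(b,u,a)=\bigvee_{b}\delta^*(a,\bar u,b)\otimes\tau(b)=\tau_a(\bar u)$, followed by applying $\overline{(\overline{f})}=f$, is exactly the routine verification the paper has in mind when it states the lemma with the remark ``we can easily show'' and omits the proof. Nothing is missing; your use of the reversal identity $\bar\delta_u(b,a)=\delta_{\bar u}(a,b)$ from the preliminaries and the commutativity of $\otimes$ is precisely the intended bookkeeping.
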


For a crisp-deterministic fuzzy automaton
${\cal A}=(A,\delta,a_{0},\tau)$, the right fuzzy language associated with~a~state
$a$ in $\cal A$ is given by
\begin{equation}\label{eq:rlcd}
\tau_{a}(u)=\tau(\delta^{*}(a,u)),\qquad
\end{equation}
for each $u\in X^*$, and in particular, $\tau_{a_{0}}=\lBrack {\cal A} \rBrack$,
i.e., the right fuzzy language associated with the initial state~$a_{0}$ is the fuzzy language
recognized by ${\cal A}$.~It can be also easily verified that the following
is true.
\begin{lemma}\label{pr:rlcd}
Let ${\cal A}=(A,\delta,a_{0},\tau)$ be a crisp-deterministic fuzzy automaton. Then
\begin{equation}\label{eq:rlld}
\tau_{\delta^{*}(a,u)} = u^{-1}\tau_{a},
\end{equation}
for all $a\in A$ and $u\in X^*$.
\end{lemma}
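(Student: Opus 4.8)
The plan is to verify the identity pointwise, evaluating both sides of \eqref{eq:rlld} at an arbitrary word $v\in X^*$ and using the definition \eqref{eq:rlcd} of the right fuzzy language in a crisp-deterministic fuzzy automaton together with the definition of the left derivative. First I would fix $a\in A$ and $u\in X^*$, and compute, for any $v\in X^*$,
\[
\tau_{\delta^{*}(a,u)}(v)=\tau\bigl(\delta^{*}(\delta^{*}(a,u),v)\bigr),
\]
which is just \eqref{eq:rlcd} applied to the state $\delta^{*}(a,u)$.

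The key step is then the composition property of the extended transition function in a crisp-deterministic fuzzy automaton, namely $\delta^{*}(\delta^{*}(a,u),v)=\delta^{*}(a,uv)$. This is a standard fact that follows by a routine induction on the length of $v$ from the recursive clauses $\delta^{*}(a,\varepsilon)=a$ and $\delta^{*}(a,wx)=\delta(\delta^{*}(a,w),x)$ given in Section~\ref{sec:cdfa}; I would either cite it as well known or dispatch it in one line. Using it, the right-hand side becomes $\tau\bigl(\delta^{*}(a,uv)\bigr)$, which by \eqref{eq:rlcd} (this time applied to the state $a$) equals $\tau_{a}(uv)$. Finally, by the definition of the left derivative, $\tau_{a}(uv)=(u^{-1}\tau_{a})(v)$. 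Chaining these equalities gives $\tau_{\delta^{*}(a,u)}(v)=(u^{-1}\tau_{a})(v)$ for every $v\in X^*$, hence $\tau_{\delta^{*}(a,u)}=u^{-1}\tau_{a}$, as claimed.

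There is no real obstacle here; the statement is essentially a bookkeeping identity, and the only substantive ingredient is the multiplicativity $\delta^{*}(a,uv)=\delta^{*}(\delta^{*}(a,u),v)$ of the extended transition function, which is immediate by induction. The proof is therefore short, and the main thing to be careful about is simply applying \eqref{eq:rlcd} at the correct state on each side.
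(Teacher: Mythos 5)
Your proof is correct: the pointwise computation via \eqref{eq:rlcd} and the composition property $\delta^{*}(\delta^{*}(a,u),v)=\delta^{*}(a,uv)$ is exactly the routine verification the paper has in mind when it states this lemma without proof ("It can be also easily verified"). Nothing is missing; the induction on $v$ for the composition property is the only ingredient and you handle it appropriately.
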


If ${\cal A}=(A,\delta,a_{0},\tau)$ is an  crisp-deterministic fuzzy automaton, we define another  crisp-deterministic~fuzzy automaton ${\cal A}_{r}=(A_{r},\delta_{r},\tau_{a_{0}},\tau_{r})$ as follows:
the set of states $A_{r}$ is the set of all right fuzzy~languages associated with states of  ${\cal A}$, and  $\delta_{r}:A_{r}\times X\to A_{r}$ and $\tau_{r}:A_{r}\to L$ are given by:
\[
\delta_{r}(\tau_{a},x)=\tau_{\delta(a,x)}, \qquad \tau_{r}(\tau_{a})=\tau_{a}(\varepsilon),
\]
for each $\tau_a\in A_r$. We have the following:

\begin{theorem}\label{th:rlada}
Let ${\cal A}=(A,\delta,a_{0},\tau)$ be an accessible crisp-deterministic fuzzy automaton.~Then  ${\cal A}_{r}$ is an accessible crisp-deterministic fuzzy automaton isomorphic to the derivative automaton ${\cal A}_f$ of the fuzzy language $f=\lBrack {\cal A} \rBrack$.
\end{theorem}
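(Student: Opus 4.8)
The plan is to exhibit an explicit isomorphism between ${\cal A}_r$ and ${\cal A}_f$ and verify it respects all the crisp-deterministic-fuzzy-automaton structure. The natural candidate map $\phi: A_r \to A_f$ sends the right fuzzy language $\tau_a$ (for $a\in A$) to the left derivative of $f$ determined by any word leading to $a$. Concretely, since ${\cal A}$ is accessible, every state $a$ is of the form $\delta^*(a_0,u)$ for some $u\in X^*$; by Lemma~\ref{pr:rlcd} we have $\tau_a = \tau_{\delta^*(a_0,u)} = u^{-1}\tau_{a_0} = u^{-1}f$, using $\tau_{a_0}=\lBrack{\cal A}\rBrack = f$. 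So in fact $A_r = \{\,u^{-1}f \mid u\in X^*\,\} = A_f$ already as sets, and I would take $\phi$ to be the identity map. The first step is therefore to record this set-theoretic identity $A_r = A_f$ carefully, noting that the apparent ambiguity (a state $\tau_a$ might arise from different words $u$) causes no problem: the value $\tau_a$ depends only on $a$, and conversely $u^{-1}f$ is determined by $u$; accessibility glues the two descriptions together.

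Next I would check that $\phi=\mathrm{id}$ is a homomorphism of crisp-deterministic fuzzy automata, i.e.\ it matches initial states, transition functions, and terminal-state fuzzy sets. For initial states: the initial state of ${\cal A}_r$ is $\tau_{a_0}$, which equals $f = \varepsilon^{-1}f$, the initial state of ${\cal A}_f$. For transitions: given a state $\tau_a = u^{-1}f$, on input $x$ the automaton ${\cal A}_r$ moves to $\delta_r(\tau_a,x)=\tau_{\delta(a,x)}$, while ${\cal A}_f$ moves to $\delta_f(u^{-1}f,x) = x^{-1}(u^{-1}f) = (ux)^{-1}f$. By Lemma~\ref{pr:rlcd} applied with the word $ux$ (or, equivalently, applying the one-letter case of the lemma to the state $a$), $\tau_{\delta(a,x)} = \tau_{\delta^*(a,x)} = x^{-1}\tau_a = x^{-1}(u^{-1}f) = (ux)^{-1}f$, so the two transitions agree. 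For terminal weights: $\tau_r(\tau_a) = \tau_a(\varepsilon)$, and $\tau_f(u^{-1}f) = (u^{-1}f)(\varepsilon)$, and these coincide because $\tau_a = u^{-1}f$. Hence $\phi$ is a homomorphism; being the identity on $A_r=A_f$ it is trivially bijective, so it is an isomorphism.

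Finally I would address accessibility of ${\cal A}_r$, which is the remaining claim in the statement. This follows immediately once we know ${\cal A}_r\cong{\cal A}_f$, since ${\cal A}_f$ is accessible (it was noted in the preliminaries that the derivative automaton is always accessible); an isomorphism transports accessibility. Alternatively one can argue directly: any state $\tau_a\in A_r$ has $a=\delta^*(a_0,u)$ for some $u$ by accessibility of ${\cal A}$, and then a straightforward induction on $|u|$ using the definition $\delta_r(\tau_b,x)=\tau_{\delta(b,x)}$ shows $\delta_r^*(\tau_{a_0},u)=\tau_{\delta^*(a_0,u)}=\tau_a$.

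The only genuinely delicate point — and the one I would be most careful about — is the well-definedness bookkeeping in the first step: one must be sure that the formula $\tau_a = u^{-1}f$ is consistent across all words $u$ reaching $a$, and that $\delta_r$ (defined on representatives $\tau_a$) is a genuine function on $A_r$; both are secured by Lemma~\ref{pr:rlcd}, which makes the identification $a\mapsto\tau_a$ compatible with the transition structure. Everything else is a routine unwinding of definitions.
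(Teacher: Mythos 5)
Your proposal is correct and follows essentially the same route as the paper: both hinge on Lemma~\ref{pr:rlcd} together with accessibility of ${\cal A}$, and your observation that $A_r=A_f$ with the identity map is just a repackaging of the paper's explicit bijection $\phi(u^{-1}f)=\tau_{\delta^{*}(a_0,u)}$, with the same homomorphism checks. Your extra remark on the well-definedness of $\delta_r$ is a nice touch (the paper leaves it implicit), but it does not change the substance of the argument.
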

\begin{proof}
Define a mapping $\phi: {A}_f \to {A}_{r}$ by
\[
\phi(u^{-1}f) = \tau_{\delta^{*}(a_{0},u)},
\]
for each $u\in X^*$.~If $u, v \in X^*$ such that $u^{-1}f = v^{-1}f$,
then according to (\ref{eq:rlld}) we obtain that
$\tau_{\delta^{*}(a_{0},u)} = \tau_{\delta^{*}(a_{0},v)}$, and hence $\phi(u^{-1}f) = \phi(v^{-1}f)$.~Thus, $\phi$ is well-defined.~On the other hand, let
$u,v\in X^*$ such that $\phi(u^{-1}f)=\phi(v^{-1}f)$, i.e., $\tau_{\delta^{*}(a_{0},u)}
=\tau_{\delta^{*}(a_{0},v)}$. Then by (\ref{eq:rlld}) it follows that
\[
u^{-1}f=u^{-1}\tau_{a_{0}}=\tau_{\delta^{*}(a_{0},u)}
=\tau_{\delta^{*}(a_{0},v)}=v^{-1}\tau_{a_{0}}=v^{-1}f.
\]
Therefore, $\phi$ is injective.~Due to the fact that ${\cal A}$ is accessible, it is easy to show that $\phi$ is a surjective mapping.

In order to prove that $\phi$ is a homomorphism, consider  arbitrary $u\in
X^*$ and $x\in X$. Then
\[
\phi(\delta_{f}(u^{-1}f,x))=\phi((ux)^{-1}f)= \tau_{\delta^{*}(a_{0},ux)}=\tau_{\delta(\delta^{*}(a_{0},u),x)}=\delta_{r}(\tau_{\delta^{*}(a_{0},u)},x)=\delta_{r}(\phi(u^{-1}f),x).
\]
Moreover,  $\phi(\varepsilon^{-1}f) = \tau_{a_{0}}$ and $\tau_{f}(u^{-1}f) = \tau_{r}(\phi(u^{-1}f))$. Hence, $\phi$ is an isomorphism.
\end{proof}

The automaton ${\cal A}_r$ will be called the {\it right language automaton\/} of $\cal A$.

By the previous theorem we obtain the following consequence.

\begin{corollary}\label{th:rl}
Let ${\cal A}=(A,\delta,a_0,\tau )$ be an accessible crisp-deterministic fuzzy automaton.~If all right fuzzy languages associated with states of $\cal A$ are pairwise different, then ${\cal A}$ is minimal.
\end{corollary}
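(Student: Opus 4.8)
The plan is to deduce minimality of $\cal A$ from Theorem \ref{th:rlada} by comparing cardinalities. By Theorem \ref{th:rlada}, the right language automaton ${\cal A}_r$ is isomorphic to the derivative automaton ${\cal A}_f$ of $f=\lBrack{\cal A}\rBrack$, and by the result of \cite{ICBP.10} recalled in Subsection \ref{sec:cdfa}, ${\cal A}_f$ is a minimal crisp-deterministic fuzzy automaton recognizing $f$. Hence $|{\cal A}_r|=|{\cal A}_f|\le|{\cal B}|$ for every crisp-deterministic fuzzy automaton $\cal B$ recognizing $f$, and in particular $|{\cal A}_r|\le|{\cal A}|$, since $\tau_{a_0}=\lBrack{\cal A}\rBrack=f$ so $\cal A$ itself recognizes $f$.

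The crucial step is to observe that the natural surjection $a\mapsto\tau_a$ from $A$ onto $A_r$ is, under the hypothesis that all right fuzzy languages associated with states of $\cal A$ are pairwise different, actually a bijection. Indeed $A_r$ is by definition the set $\{\tau_a\mid a\in A\}$ of all right fuzzy languages, so the map is onto; and injectivity is precisely the stated assumption that $a\ne b$ implies $\tau_a\ne\tau_b$. Therefore $|{\cal A}|=|A|=|A_r|=|{\cal A}_r|$.

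Combining the two displayed (in)equalities gives $|{\cal A}|=|{\cal A}_r|=|{\cal A}_f|\le|{\cal B}|$ for every crisp-deterministic fuzzy automaton $\cal B$ recognizing $f=\lBrack{\cal A}\rBrack$, which is exactly the definition of $\cal A$ being a minimal crisp-deterministic fuzzy automaton. I do not anticipate a genuine obstacle here: the only point requiring a moment's care is to confirm that the assignment $a\mapsto\tau_a$ is well defined as a map $A\to A_r$ (immediate from the definition of $A_r$) and that its fibres are singletons exactly under the hypothesis, so that the cardinality bookkeeping is valid even when $A$ is infinite.
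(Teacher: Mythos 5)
Your proposal is correct and follows essentially the same route as the paper: both rest on Theorem \ref{th:rlada} together with the observation that, under the hypothesis, the natural surjection $a\mapsto\tau_a$ from $A$ onto $A_r$ is injective. The only cosmetic difference is that the paper upgrades this map to an isomorphism of automata, while you only use the resulting bijection to compare cardinalities, which indeed suffices since minimality is defined by cardinality.
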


\begin{proof}
It is clear that the function $\phi :A\to A_r$ defined by $\phi (a)=\tau_a$ is a homomorphism of $\cal A$ onto ${\cal A}_r$. Therefore, if all right fuzzy languages~associated with states of $\cal A$ are pairwise different, then $\phi $ is an isomorphism of $\cal A$ onto ${\cal A}_r$, and according to Theorem \ref{th:rlada}, ${\cal A}$ is minimal.
\end{proof}

Let us note that if ${\cal A}=(A,\delta,a_0,\tau )$ is a crisp-deterministic fuzzy automaton, then its reverse Nerode~auto\-maton is ${\cal A}_{\overline N}=(A_{\overline N},\delta_{\overline
N},\tau_\varepsilon,\tau_{\overline N})$, where $A_{\overline N}$ and $\delta_{\overline N}:A_{\overline N}\times X\to A_{\overline N}$ have the same form as in the
general case, whereas the function $\tau_{\overline N}:A_{\overline N}\to L$ is given by
\begin{equation}\label{eq:taun-cd}
\tau_{\overline N}(\tau_u)=\tau_u(a_0),
\end{equation}
for each $u\in X^*$.

Now, we are ready to prove the following

\begin{theorem}\label{th:rdma}
For any accessible crisp-deterministic fuzzy automaton ${\cal A}=(A,\delta,a_{0},\tau)$, the reverse Nerode automaton ${\cal A}_{\overline N}$~is~a~minimal crisp-deterministic
fuzzy automaton equivalent to $\overline{\cal A}$.
\end{theorem}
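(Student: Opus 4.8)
The plan is to show that ${\cal A}_{\overline N}$ recognizes $\overline{\lBrack{\cal A}\rBrack}$ and that it is minimal, and to obtain minimality by exhibiting it as (essentially) the right language automaton of a suitable accessible cdfa, so that Corollary \ref{th:rl} applies. Since ${\cal A}_{\overline N}$ is by definition the Nerode automaton of $\overline{\cal A}$, equivalence to $\overline{\cal A}$ is immediate from the cited result of \cite{ICB.08} that $\lBrack{\cal B}_N\rBrack = \lBrack{\cal B}\rBrack$ for every fuzzy automaton ${\cal B}$; taking ${\cal B}=\overline{\cal A}$ gives $\lBrack{\cal A}_{\overline N}\rBrack = \lBrack\overline{\cal A}\rBrack = \overline{\lBrack{\cal A}\rBrack}$. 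So the whole content is minimality.

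For minimality I would argue as follows. By Corollary \ref{th:rl} it suffices to show that ${\cal A}_{\overline N}$ is accessible and that the right fuzzy languages associated with its states are pairwise distinct. Accessibility is built into the Nerode construction: every state has the form $\tau_u$ and $\delta_{\overline N}^*(\tau_\varepsilon, u) = \tau_{\bar u}$ reaches it (here $\tau_\varepsilon = \tau$ is the initial state of ${\cal A}_{\overline N}$, and one checks $\delta_{\overline N}^*(\tau_\varepsilon,u)=\tau_{\bar u}$ by induction on $|u|$ using $\delta_{\overline N}(\tau_v,x)=\tau_{xv}$). For the right languages, fix a state $\tau_u$ of ${\cal A}_{\overline N}$; by formula (\ref{eq:rlcd}) applied to the cdfa ${\cal A}_{\overline N}$ and then by the definition (\ref{eq:taun-cd}) of $\tau_{\overline N}$, the right fuzzy language $\rho_{\tau_u}$ associated with $\tau_u$ satisfies, for each $v\in X^*$,
\[
\rho_{\tau_u}(v) = \tau_{\overline N}\!\big(\delta_{\overline N}^*(\tau_u,v)\big) = \tau_{\overline N}(\tau_{\bar v u}) = \tau_{\bar v u}(a_0).
\]
Now $\tau_{\bar v u}(a_0) = (\delta_{\bar v u}\circ\tau)(a_0) = \bigvee_{b\in A}\delta^*(a_0,\bar v u,b)\otimes\tau(b)$, and since ${\cal A}$ is crisp-deterministic this collapses to $\tau(\delta^*(a_0,\bar v u)) = \lBrack{\cal A}\rBrack(\bar v u) = \overline{\lBrack{\cal A}\rBrack}(\bar u\, v) = (\bar u^{-1}\,\overline{\lBrack{\cal A}\rBrack})(v)$. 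Hence $\rho_{\tau_u} = \bar u^{-1}f$ where $f = \overline{\lBrack{\cal A}\rBrack}$, i.e. $\rho_{\tau_u}$ is precisely the left derivative of the recognized language $f$ by the word $\bar u$. Therefore two states $\tau_u$ and $\tau_w$ of ${\cal A}_{\overline N}$ have the same right language if and only if $\bar u^{-1}f = \bar w^{-1}f$.

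It remains to see that $\tau_u = \tau_w$ forces $\rho_{\tau_u}=\rho_{\tau_w}$ only in the trivial way, i.e. that the map $\tau_u \mapsto \rho_{\tau_u}$ is injective — equivalently, that $\bar u^{-1}f = \bar w^{-1}f$ implies $\tau_u = \tau_w$. This is the step I expect to be the main obstacle, because it is the only place where accessibility of ${\cal A}$ is genuinely used. The point is that, since ${\cal A}$ is accessible, $\tau_u$ as a fuzzy subset of $A$ is determined by the reverse-Nerode dynamics, and one shows $\tau_u(a) = (\bar u^{-1}\tau_a)(\varepsilon)$ (or more directly $\tau_u(a)=\tau_a(\bar u)$ via the commutativity identity $\bar\delta_u(a,b)=\delta_{\bar u}(b,a)$ already recorded in the preliminaries); together with Lemma \ref{pr:rlcd}, which identifies $\tau_a$ as the left derivative $\tau_{a}=$ (appropriate derivative of) $\lBrack{\cal A}\rBrack$ reached by the access word of $a$, this lets one recover each coordinate $\tau_u(a)$ from a derivative of $f$, and hence from $\bar u^{-1}f$ together with the accessibility data of ${\cal A}$. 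Once injectivity of $\tau_u\mapsto\rho_{\tau_u}$ is in hand, the right languages of ${\cal A}_{\overline N}$ are pairwise different, Corollary \ref{th:rl} gives minimality, and the theorem follows. A cleaner alternative for this last step is to invoke Theorem \ref{th:rlada}: one checks directly that ${\cal A}_{\overline N}$ is isomorphic to the right language automaton $({\overline{\cal A}}^{\mathrm{cd}})_r$ of an accessible cdfa equivalent to $\overline{\cal A}$ — but since Theorem \ref{th:rlada} already yields minimality of any such right language automaton and we have shown ${\cal A}_{\overline N}$ recognizes $\overline{\lBrack{\cal A}\rBrack}$ and has its states parametrized exactly by the left derivatives of that language, the identification with the (minimal) derivative automaton ${\cal A}_{\overline{\lBrack{\cal A}\rBrack}}$ is the most economical route.
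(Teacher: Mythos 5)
Your overall route is essentially the paper's: equivalence of ${\cal A}_{\overline N}$ with $\overline{\cal A}$ via Theorem 4.1 of \cite{ICB.08}, and minimality via Corollary \ref{th:rl}, by showing that the right fuzzy languages attached to the states of ${\cal A}_{\overline N}$ are pairwise distinct, with accessibility of ${\cal A}$ doing the real work. Moreover, your computation $\rho_{\tau_u}(v)=\tau_{\overline N}\bigl(\delta^*_{\overline N}(\tau_u,v)\bigr)=\tau_{\bar v u}(a_0)=\tau(\delta^*(a_0,\bar v u))$ is exactly the identity on which the paper's proof rests.

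The only step you leave open --- injectivity of $\tau_u\mapsto\rho_{\tau_u}$, which you flag as ``the main obstacle'' --- is in fact a two-line consequence of the formula you already derived, and closing it this way is precisely what the paper does: if $\tau_u\ne\tau_w$, choose $a\in A$ with $\tau_u(a)\ne\tau_w(a)$ and, by accessibility, $s\in X^*$ with $a=\delta^*(a_0,s)$; then $\rho_{\tau_u}(\bar s)=\tau(\delta^*(a_0,su))=\tau(\delta^*(a,u))=\tau_u(a)$ and likewise $\rho_{\tau_w}(\bar s)=\tau_w(a)$, so $\rho_{\tau_u}\ne\rho_{\tau_w}$. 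Note that the identity you sketch for this step is reversed: for a crisp-deterministic ${\cal A}$ one has $\tau_u(a)=\tau(\delta^*(a,u))=\tau_a(u)$, not $\tau_a(\bar u)$, and combining this with Lemma \ref{pr:rlcd} (i.e.\ $\tau_a=s^{-1}\lBrack{\cal A}\rBrack$) gives the corrected recovery statement $\tau_u(a)=(\bar u^{-1}f)(\bar s)$, where $f=\overline{\lBrack{\cal A}\rBrack}$. Also, your ``cleaner alternative'' via the derivative automaton does not bypass this point: well-definedness and surjectivity of $\tau_u\mapsto\bar u^{-1}f$ only give $|{\cal A}_f|\le|{\cal A}_{\overline N}|$, so the same injectivity is still needed to conclude minimality. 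With that step filled in as above, your proof is correct and coincides with the paper's argument.
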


\begin{proof}
As we have already noted, ${\cal A}_{\overline N}$ is a crisp-deterministic
fuzzy automaton equivalent to $\overline{\cal A}$, so it remains to show that it is minimal.~According to Corollary \ref{th:rl}, it is enough to prove that all right fuzzy
languages associated with states of ${\cal A}_{\overline N}$ are pairwise different.

Let $\tau_u,\tau_v\in A_{\overline N}$, where $u,v\in X^*$, be two different states of ${\cal A}_{\overline N}$.~Then there is
$a\in A$ such that $\tau_u(a)\ne \tau_v(a)$, and since $\cal A$ is accessible, there is $w\in X^*$ such that $a=\delta^*(a_0,w)$. According to (\ref{eq:rlcd})
we obtain that
\[
\tau_{\tau_u}(\overline w)=\tau_{\overline N}(\delta^*_{\overline N}(\tau_u,\overline
w))=\tau_{\overline N}(\tau_{wu})=\tau_{wu}(a_0)=\tau_{a_0}(wu)=\tau (\delta^*(a_0,wu))=\tau (\delta(\delta^*(a_0,w),u))=\tau(\delta^* (a,u)) = \tau_{u}(a),
\]
whence $\tau_{\tau_u}(\overline w)=\tau_{u}(a)$, and in the same way we show
that $\tau_{\tau_v}(\overline w)=\tau_{v}(a)$.~Since $\tau_u(a)\ne \tau_v(a)$,~we~conclude~that $\tau_{\tau_u}(\overline w)\ne \tau_{\tau_v}(\overline w)$, and hence,
$\tau_{\tau_u}$ and $\tau_{\tau_v}$ are different right fuzzy languages associated
with states of ${\cal A}_{\overline N}$.
\end{proof}

Let ${\cal A}$ be a fuzzy automaton over an alphabet $X$ and a complete residuated lattice  ${\cal L}$. The {\it Brzozowski automaton\/} of ${\cal A}$, in notation ${\cal A}_B$, is a fuzzy automaton obtained from $\cal A$ applying twice the construction of the reverse Nerode automaton, i.e.,
\[
{\cal A}_{B}=\bigl({\cal A}_{\overline N}\bigr)_{\overline N}=\bigl({\overline{({\overline{\cal A})_{N}}}}\bigr)_{N} .
\]

Now we are ready to state and prove the main result of this paper.

\begin{theorem}\label{th:bd}
Let ${\cal A}$ be a fuzzy automaton over an alphabet $X$ and a complete residuated lattice  ${\cal L}$. The Brzozowski automaton ${\cal A}_{B}$ is a minimal crisp-deterministic fuzzy automaton equivalent to ${\cal A}$.
\end{theorem}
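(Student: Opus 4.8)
The plan is to derive Theorem~\ref{th:bd} as a short consequence of Theorem~\ref{th:rdma}, combined with the elementary identities governing the interaction of the reverse operation with recognized fuzzy languages. First I would unwind the definition: ${\cal A}_B=\bigl({\cal A}_{\overline N}\bigr)_{\overline N}$, so ${\cal A}_B$ is the reverse Nerode automaton of the automaton ${\cal A}_{\overline N}$, which is itself the Nerode automaton of the reverse fuzzy automaton $\overline{\cal A}$. Since the Nerode automaton of any fuzzy automaton is a crisp-deterministic fuzzy automaton whose state set consists of the fuzzy sets $\tau_u$, $u\in X^*$, each of which is obtained from $\tau_\varepsilon$ by applying $\delta_{\overline N}$ along $u$, the automaton ${\cal A}_{\overline N}$ is an \emph{accessible} crisp-deterministic fuzzy automaton. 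Moreover, as recalled in the Preliminaries, $\lBrack{\cal A}_{\overline N}\rBrack=\lBrack(\overline{\cal A})_N\rBrack=\lBrack\overline{\cal A}\rBrack=\overline{\lBrack{\cal A}\rBrack}$.

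Next I would apply Theorem~\ref{th:rdma} with the accessible crisp-deterministic fuzzy automaton ${\cal A}_{\overline N}$ playing the role of ${\cal A}$. This immediately gives that its reverse Nerode automaton $\bigl({\cal A}_{\overline N}\bigr)_{\overline N}={\cal A}_B$ is a minimal crisp-deterministic fuzzy automaton equivalent to $\overline{{\cal A}_{\overline N}}$, that is, $\lBrack{\cal A}_B\rBrack=\lBrack\overline{{\cal A}_{\overline N}}\rBrack$.

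It then only remains to identify $\lBrack\overline{{\cal A}_{\overline N}}\rBrack$ with $\lBrack{\cal A}\rBrack$. Using that the reverse of a fuzzy automaton recognizes the reverse of the recognized fuzzy language, and that $\overline{(\overline f)}=f$ for every fuzzy language $f$, one obtains $\lBrack\overline{{\cal A}_{\overline N}}\rBrack=\overline{\lBrack{\cal A}_{\overline N}\rBrack}=\overline{\overline{\lBrack{\cal A}\rBrack}}=\lBrack{\cal A}\rBrack$. Hence ${\cal A}_B$ is a minimal crisp-deterministic fuzzy automaton equivalent to ${\cal A}$, which is exactly the assertion of the theorem.

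As for the difficulty of the argument: there is essentially no hard computational step, since the statement is engineered to fall out of Theorem~\ref{th:rdma}. The only point I would take care to state explicitly is that Theorem~\ref{th:rdma} carries an accessibility hypothesis, so one must note that the intermediate automaton ${\cal A}_{\overline N}$ is automatically accessible precisely because it is a Nerode automaton; all the remaining work is bookkeeping with the identities $\lBrack\overline{\cal B}\rBrack=\overline{\lBrack{\cal B}\rBrack}$ and $\overline{(\overline f)}=f$.
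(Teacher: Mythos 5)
Your proposal is correct and follows essentially the same route as the paper: both reduce the statement to Theorem~\ref{th:rdma} applied to the accessible crisp-deterministic fuzzy automaton ${\cal A}_{\overline N}$, together with the equivalence $\lBrack{\cal A}_{\overline N}\rBrack=\lBrack\overline{\cal A}\rBrack$ (Theorem 4.1 of the cited determinization paper) and the reversal identities $\lBrack\overline{\cal B}\rBrack=\overline{\lBrack{\cal B}\rBrack}$, $\overline{(\overline f)}=f$; if anything you are slightly more explicit than the paper in checking the accessibility hypothesis. The only microscopic slip is that the state $\tau_u$ of ${\cal A}_{\overline N}$ is reached from $\tau_\varepsilon$ by the word $\overline u$ rather than $u$ (since $\delta_{\overline N}(\tau_v,x)=\tau_{xv}$), which does not affect the accessibility claim or the argument.
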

\begin{proof}
This is an immediate consequence of Theorem 4.1 of \cite{ICB.08} and Theorem \ref{th:rdma}.

Namely,~according to Theorem 4.1 of \cite{ICB.08}, the reverse Nerode automaton ${\cal A}_{\overline N}=(\overline{\cal A})_{N}$ is a crisp-determi\-nistic fuzzy automaton equivalent to $\overline{\cal A}$, and its reverse Nerode automaton $\bigl({\cal A}_{\overline N}\bigr)_{\overline N}$ is a crisp-deterministic fuzzy automaton equivalent to ${\cal A}$. Therefore, by Theorem \ref{th:rdma}, ${\cal A}_B$ is a minimal crisp-deterministic fuzzy automaton equivalent to ${\cal A}$.
\end{proof}

Let ${\cal A}=(A,\delta,\sigma,\tau)$ be a fuzzy finite automaton with $n$ states and $m$ input letters, and suppose that the subsemiring ${\cal L}^*(\delta,\sigma,\tau )$ of the semiring $(L,\lor,\otimes,0,1)$, generated by all membership~values taken by $\delta $, $\sigma $ and $\tau $, is finite and has $k$ elements.~An algorithm which constructs the Nerode automaton of $\cal A$~was provided in \cite{ICB.08}, and it can easily be transformed to an algorithm which constructs the reverse Nerode~automaton~of~$\cal A$. Any of these algorithms builds the \emph{transition tree} of the crisp-deterministic fuzzy automaton that it constructs (Nerode or reverse Nerode), and it is an $m$-ary tree with at most $k^n$ internal vertices which correspond to the states of the automaton under construction.~Computationally most demanding part of the algorithm is the one where for any newly-constructed fuzzy set (a vertex of the tree) the algorithm checks whether it has~already been computed before.~The computational time of this part, and the whole algorithm, is~$O(mnk^{2n})$ (cf.~\cite{JMIC.14}). Therefore, the first round of the application of the double reversal procedure to $\cal A$ produces an automaton with at most $k^n$ states, and the computational time of this round is $O(mnk^{2n})$.~The second round may start from an exponentially larger automaton, but despite that, this round produces a minimal crisp-deterministic fuzzy automaton equivalent to $\cal A$, an automaton that is not greater than the Nerode automaton of $\cal A$, which can not have more than $k^n$ states.~Thus, the resulting transition tree can not have more than~$k^n$~internal~vertices, and consequently, the second round has the same computational time $O(mnk^{2n})$.~This means that the total computational time of the Brzozowski's double reversal algorithm for the fuzzy automaton $\cal A$ is $O(mnk^{2n})$, the same as for constructions of the Nerode and the reverse Nerode automaton of $\cal A$.

\smallskip

Finally, we give the following example.

\begin{example}\label{ex:PS1}\rm
Let ${\cal A}=(A,\delta,\sigma,\tau)$ be a fuzzy finite automaton over the alphabet $X=\{x\}$ and the Goguen (product) structure, given by the transition graph shown in Figure 1.

%\begin{figure}
%%%%
\begin{center}
%%%%
\psset{unit=1.2cm}
%\newpsobject{showgrid}{psgrid}{subgriddiv=1,griddots=10,gridlabels=6pt}\showgrid
%%%
\begin{pspicture}(-2.5,-0.8)(2.5,2.4)%\showgrid
\pnode(0,0){C}
\SpecialCoor
\rput([angle=0,nodesep=20mm,offset=0pt]C){\cnode{3mm}{A1}}
\rput([angle=90,nodesep=20mm,offset=0pt]C){\cnode{3mm}{A2}}
\rput([angle=180,nodesep=20mm,offset=0pt]C){\cnode{3mm}{A0}}
\rput([angle=180,nodesep=5mm,offset=0pt]A0){\pnode{I}}
\rput([angle=0,nodesep=5mm,offset=0pt]A1){\pnode{J}}
\rput([angle=0,nodesep=5mm,offset=0pt]A2){\pnode{K}}
\rput(A1){\small $a_1$}
\rput(A2){\small $a_2$}
\rput(A0){\small $a_0$}
\NormalCoor
\ncline{->}{I}{A0}\aput[0pt](.1){\scriptsize $1$}
\ncline{<-}{J}{A1}\bput[1pt](.1){\scriptsize $1$}
\ncline{->}{A0}{A2}\aput[0pt](.5){\scriptsize $x/1$}
\ncline{->}{A2}{A1}\aput[0pt](.5){\scriptsize $x/1$}
\ncline{->}{A0}{A1}\aput[0pt](.5){\scriptsize $x/0.5$}
\nccircle[angleA=0]{->}{A2}{0.3}\bput[-35pt](.10){\scriptsize $x/0.5$}
\nccircle[angleA=180]{->}{A1}{0.3}\bput[-32pt](.10){\scriptsize $x/1$}
\end{pspicture}\\
\small Figure 1. The transition graph of the fuzzy automaton ${\cal A}$
%\caption{The transition graph of the fuzzy automaton ${\cal A}$}\label{fig:fA}
%%%%
\end{center}
%%%%
%\end{figure}
%%%%
In matrix form, $\sigma$, $\delta_{x}$ and $\tau$ are represented as follows:
\[
\sigma=\begin{bmatrix}
1 & 0 & 0 \end{bmatrix},
\ \ \ \ \delta_x=\begin{bmatrix}
0 & 0.5 & 1\\
0 & 1 & 0\\
0 & 1 & 0.5 \end{bmatrix},\ \ \ \ \tau=\begin{bmatrix}
0 \\
1 \\
0
\end{bmatrix}.
\]
It is easy to verify that $\sigma_x=\begin{bmatrix}0 & 0.5 & 1\end{bmatrix}$ and
\[
\sigma_{x^n}=\begin{bmatrix} 0 & 1 & 0.5^{n-1}\end{bmatrix},
\]
for each $n\in \Bbb N$, $n\geqslant 2$, which means that the Nerode automaton of $\cal A$ has infinitely many states.

On the other hand, it is not hard to check that the reverse Nerode automaton ${\cal A}_{\overline N}$ and the Brzozowski automaton ${\cal A}_B$ are mutually isomorphic, and they are represented by the graph in Figure 2.
%%%
%\begin{figure}
%%%%
\begin{center}
%%%%
\psset{unit=1.2cm}
%\newpsobject{showgrid}{psgrid}{subgriddiv=1,griddots=10,gridlabels=6pt}\showgrid
%%%
\begin{pspicture}(-2.5,-0.8)(2.5,0.5)%\showgrid
\pnode(0,0){C}
\SpecialCoor
\rput([angle=0,nodesep=20mm,offset=0pt]C){\cnode{3mm}{A2}}
\rput([angle=0,nodesep=0mm,offset=0pt]C){\cnode{3mm}{A1}}
\rput([angle=180,nodesep=20mm,offset=0pt]C){\cnode{3mm}{A0}}
\rput([angle=180,nodesep=5mm,offset=0pt]A0){\pnode{I}}
\rput([angle=270,nodesep=5mm,offset=0pt]A0){\pnode{J}}
\rput([angle=270,nodesep=5mm,offset=0pt]A1){\pnode{L}}
\rput([angle=270,nodesep=5mm,offset=0pt]A2){\pnode{K}}
\rput(A2){\small $b_2$}
\rput(A1){\small $b_1$}
\rput(A0){\small $b_0$}
\NormalCoor
\ncline{->}{I}{A0}
\ncline{<-}{J}{A0}\bput[1pt](.1){\scriptsize $0$}
\ncline{<-}{L}{A1}\bput[2pt](.1){\scriptsize $0.5$}
\ncline{<-}{K}{A2}\bput[2pt](.1){\scriptsize $1$}
\ncline{->}{A0}{A1}\Aput[1pt]{\scriptsize $x$}
\ncline{->}{A1}{A2}\Aput[1pt]{\scriptsize $x$}
\nccircle[angleA=270]{<-}{A2}{0.4}\bput[0pt](.75){\qquad\scriptsize $x$}
\end{pspicture}\\
\small Figure 2. The transition graph of the reverse Nerode automaton ${\cal A}_{\overline N}$ and the Brzozowski automaton ${\cal A}_{B}$ of $\cal A$
%\caption{The transition graph of the cdwfa ${\cal A}_{\cal B}$ from Example \ref{ex:PS1}}\label{fig:AB}
%%%%
\end{center}
%%%%
%\end{figure}
%%%%
\end{example}

This example demonstrates that there is a fuzzy finite automaton $\cal A$ whose Nerode automaton ${\cal A}_{N}$~is infinite, but Brzozowski automaton ${\cal A}_{\cal B}$ is finite.~Note that the determinization methods developed by~B\v elo\-hl\'avek \cite{Bel.02} and Li and Pedrycz \cite{LP.05} always result in automata whose cardinality is greater than or equal~to the cardinality of the related Nerode automaton (cf.~\cite{ICB.08}), and therefore, in this case these methods also~give infinite automata.~On the other hand, the method developed by Jan\v ci\'c et al.~\cite{JIC.11} always results in an~auto\-ma\-ton whose cardinality is less than or equal to the cardinality of the related Nerode automaton, but~according to Theorem 3.7 of \cite{JIC.11}, this automaton is finite if and only if the related Nerode automaton is finite.~Hence, in this case the method from \cite{JIC.11} also gives an infinite automaton.~Summing up, we conclude that all the above mentioned methods applied to the fuzzy finite automaton $\cal A$ from this example produce infinite automata, but Brzozowski automaton ${\cal A}_B$ is finite.

\section{Concluding remarks}

Brzozowski's double reversal algorithm is a well-known determinization-minimization algorithm which, despite its worst-case exponential time complexity, has excellent performance in practice and often~outper\-forms theoretically faster algorithms.~Here we have developed a Brzozowski type algorithm for fuzzy~autom\-ata.~We have shown that this algorithm outperforms all previously known methods for determinization of fuzzy automata, in the sense that it not only produces a smaller automaton than all previous methods, but even when all these methods produce infinite automata, Brzozowski type determinization can produce a finite one.~No matter that Brzozowski type algorithm has been developed here for fuzzy automata over complete residuated lattice, without any modifications it can be also applied to fuzzy automata over lattice-ordered monoids and weighted automata over commutative semirings.

In our future research, we will search for determinization methods that produce automata having even smaller number of states than the Nerode automaton or the reverse Nerode automaton.~Such an~improvement of the construction of the reverse Nerode automaton could also significantly improve the~performance of the double reversal algorithm for fuzzy automata.~In addition, it could be interesting to exploit the concept of an approximate equivalence of fuzzy automata, introduced in \cite{BK.09}, and study approximate~determini\-za\-tion of a fuzzy automaton, a procedure of constructing a crisp-deterministic fuzzy automaton whose~fuzzy~language is approximately equal to the fuzzy language of the given one.

\end{document}